\pdfoutput=1
\documentclass[11pt]{article}

\usepackage[T1]{fontenc}
\usepackage[utf8]{inputenc}
\usepackage[margin=1in]{geometry}
\usepackage{lmodern}
\usepackage{amsmath,amssymb,amsfonts,mathtools}
\usepackage{amsthm}
\usepackage{bm}
\usepackage{booktabs}
\usepackage{tabularx}
\usepackage{array}
\usepackage{graphicx}
\usepackage{xcolor}
\usepackage{microtype}
\usepackage{hyperref}
\usepackage{caption}
\usepackage{subcaption}
\usepackage{siunitx}
\usepackage{enumitem}
\usepackage{placeins}
\usepackage{float}

\hypersetup{
  hidelinks,
  pdftitle={Reduced Thermodynamic-Topological Observables for Multiscale Dissipative Systems},
  pdfauthor={Andrea Caffagni}
}

\setlist[itemize]{leftmargin=1.4em,itemsep=0.2em,topsep=0.3em}
\newcommand{\E}{\mathbb{E}}
\newcommand{\R}{\mathbb{R}}
\newcommand{\eps}{\varepsilon}
\newcommand{\abs}[1]{\left\lvert #1\right\rvert}

\newcommand{\Def}{\mathrm{Def}}
\newcommand{\Int}{\mathrm{Int}}

\newcommand{\pp}[1]{\left[#1\right]_+}

\newtheorem{proposition}{Proposition}

\title{Reduced Thermodynamic-Topological Observables for Multiscale Dissipative Systems\\
\large A fusion-relevant shell-model study of detection, design screening, and conservative operation}

\author{Andrea Caffagni\\
Independent Researcher\\
\texttt{andreacaffagniarxiv@gmail.com}}

\date{}

\begin{document}
\maketitle

\begin{abstract}
We introduce a reduced set of thermodynamic-topological observables for ordered multiscale dissipative systems.
An interface-local quadratic reduction produces bounded integrity and residual channels, a flux-force stability channel, a weighted path-graph bottleneck channel, and a coarse-graining drift indicator.
The goal is practical rather than universal: a compact, interpretable layer of observables that can be computed repeatedly and compared across regimes.

The main case study is a fusion-relevant MHD Sabra shell model.
Across 400 synthetic anomalous-dissipation probes, the local Prigogine-style channel detects 400/400 events, while a composite alarm detects 399/400 with lower latency.
When an OPCR trigger and an energy-collapse proxy are both observed within the same event, the earliest OPCR trigger leads the proxy by $11.29 \pm 13.49$ model-time units on average (median $6.15$, IQR $[1.23,17.22]$; 255/313 early cases).
A scan over 5000 coupling geometries raises the best log-Cheeger conductance from a baseline mean $0.07475 \pm 0.00171$ to $0.09465$ (+26.6\%), whereas the current minimum-$\Phi$ geometry remains below the baseline mean.
For operation, integrity-aware actuation and a conservative $\Phi$-instrumented variant achieve 3.01x and 3.02x the recovery-per-unit-power efficiency of a uniform baseline.

These numerics support a topology-first reading of the framework: $h_{\log}$ is credible as a Phase-1 design-screening observable, whereas $\Phi$ is presently best viewed as an operational or certification score.
For fusion, the natural target is stellarator configuration screening, where magnetic topology does most of the work.
A short appendix gives a toy neural-network portability check and is not used for the paper's main claims.
\end{abstract}

\vspace{0.4em}
\noindent\textbf{Keywords:} dissipative systems; nonequilibrium thermodynamics; spectral graph theory; shell models; stellarators; early warning; design screening.

\section{Introduction}
Many systems of scientific and engineering interest are simultaneously \emph{dissipative}, \emph{multiscale}, and \emph{operated away from equilibrium}.
Examples include turbulent cascades, magnetized plasmas, transport networks, and gradient-based learning systems.
In such settings one repeatedly encounters two related tasks:
\begin{itemize}
    \item choose or optimize a \emph{structure} offline (geometry, coupling pattern, topology, architecture), and
    \item monitor and operate the system online as regimes drift, bottlenecks migrate, or instabilities develop.
\end{itemize}
A useful reduced framework should therefore do two things at once:
it should preserve enough physics to remain interpretable, yet remain compact enough to be computed repeatedly and compared across operating conditions.

This paper proposes such a reduction.
The aim is not to derive a universal microscopic law for dissipation, but to provide a \emph{practical mesoscopic language} for ordered multiscale systems.
The construction starts from interface-local quadratic statistics, converts them into bounded thermodynamic observables, and then induces a weighted path graph whose bottlenecks can be read by standard spectral tools.
A coarse-graining channel is added to track how slope variability collapses across scale.
The resulting quantities organize system behavior into three complementary views: thermodynamic stress, topological fragility, and cascade drift.

\paragraph{Novelty and scope.}
The novelty of the present work lies in the reduced cross-domain workflow and in the empirical separation between \emph{design-oriented} and \emph{operation-oriented} observables, not in any single classical ingredient taken in isolation.
Onsager-style flux-force pairs, Cheeger/Fiedler diagnostics, and coarse-graining ratios are classical objects.
What is new here is the way they are compressed into a bounded shell-to-graph reduction, organized into a two-phase pipeline, and then tested numerically on distinct tasks within a single implementation philosophy.

\paragraph{Main message.}
For the fusion surrogate studied here, the useful separation is precise.
Topological observables are strongest as \emph{design-screening} quantities.
Fast thermodynamic observables are strongest as \emph{event-detection and operational-monitoring} quantities.
A single scalar score $\Phi$ remains useful as a compact operational or certification summary, but the present numerics do \emph{not} support replacing topology-based screening by naive direct minimization of the current uncalibrated $\Phi$ over geometry space.
Making that distinction explicit substantially improves both the technical and editorial credibility of the framework.

\paragraph{Contributions.}
The specific contributions of this manuscript are:
\begin{itemize}
    \item a compact shell-to-graph reduction with bounded interface observables;
    \item a local Prigogine-style channel that reacts rapidly to injected degradations;
    \item an explicit composite alarm built from local $\kappa$, local $\delta^2\sigma$, and regime-transition departures;
    \item a fusion-relevant MHD shell-model study covering baseline auditing, event detection, regime fingerprints, design screening, and conservative operation;
    \item an explicit stellarator-oriented interpretation, in which topology dominates and control remains secondary and interpretable;
    \item a short toy-scale appendix showing that the same observables can be computed during neural-network training.
\end{itemize}

\section{Context and related work}
Onsager's reciprocal relations and linear irreversible thermodynamics provide the standard starting point for flux-force descriptions near nonequilibrium steady states \cite{Onsager1,Onsager2,deGrootMazur}.
The Glansdorff-Prigogine program introduced second-variation criteria for the stability of nonequilibrium states \cite{GlansdorffPrigogine}.
Graph bottlenecks and their relationship to Laplacian spectra are classically captured by Cheeger conductance, spectral gap, and Fiedler vectors \cite{Cheeger1970,Chung1997,Fiedler1973}.
For turbulence, shell models offer a controlled environment in which cascade physics can be explored at low cost while retaining nontrivial nonlinear behavior \cite{Lvov1998,GiulianiCarbone1998,Plunian2013}.

On the fusion side, stellarators are the most natural design-driven setting for the present ideas.
Their confinement quality is largely shaped by externally imposed three-dimensional magnetic topology, and that topology can be validated directly, even before plasma operation, as demonstrated in Wendelstein 7-X \cite{Pedersen2016}.
Modern stellarator research relies heavily on numerical optimization of configurations, equilibria, and coils \cite{Helander2014,Boozer2015,HirshmanWhitson1983,Paul2021,Landreman2021,Zhu2018}.
This makes stellarators a more natural endpoint for a topology-aware reduced metric than tokamak-style feedback control, where plasma current and real-time control play a much larger role.

\section{Reduced framework}
\label{sec:framework}

\subsection{Interface-local quadratic reduction}
Let an ordered multiscale system be represented by shells $j=0,\dots,N$.
Rather than forcing a single microscopic definition across domains, we assume that each interface $j$ provides a pair of domain-specific observables $(\phi_j,\psi_j)$ extracted from local shell signals.
We then define
\begin{equation}
    a_j = \E\big[\abs{\phi_j}^2\big] + \eps,
    \qquad
    J_j = \E\big[\mathrm{Re}(\phi_j^*\psi_j)\big],
    \qquad
    Y_j = \E\big[\abs{\psi_j}^2\big] + \eps.
    \label{eq:ajjy}
\end{equation}
Here $a_j$ acts as a local mobility or conductivity scale, $J_j$ as an interface flux-like quantity, and $Y_j$ as the local quadratic scale of the increment or transfer surrogate.

The central local polynomial is
\begin{equation}
    P_j(\lambda) = a_j\lambda^2 - 2J_j\lambda + Y_j,
\end{equation}
which admits the completion-of-the-square decomposition
\begin{equation}
    P_j(\lambda) = a_j(\lambda-p_j)^2 + \kappa_j,
    \qquad
    p_j = \frac{J_j}{a_j},
    \qquad
    \kappa_j = Y_j - \frac{J_j^2}{a_j}.
    \label{eq:completion}
\end{equation}
By Cauchy-Schwarz, $J_j^2 \le a_jY_j$, so $\kappa_j \ge 0$.
Two bounded ratios follow immediately:
\begin{equation}
    \Int_j = \frac{J_j^2}{a_jY_j} \in [0,1],
    \qquad
    r_j = \frac{\kappa_j}{Y_j} = 1-\Int_j \in [0,1].
    \label{eq:int_regime}
\end{equation}
We use $\Int_j$ as an \emph{integrity} score and $r_j$ as a \emph{regime ratio}.
In the fusion surrogate, the thresholds $r_j<0.1$, $0.1\le r_j<0.9$, and $r_j\ge 0.9$ are used to label interfaces as approximately Onsager, transitional, and strongly nonlinear, respectively.

\begin{table}[t]
\centering
\caption{Core reduced observables used in this paper. The formulas match the implementation used in the reported experiments.}
\label{tab:observables}
\small
\begin{tabularx}{\linewidth}{>{\raggedright\arraybackslash}p{2.0cm} >{\raggedright\arraybackslash}p{4.5cm} >{\raggedright\arraybackslash}p{2.2cm} X}
\toprule
Quantity & Definition & Typical range & Role \\
\midrule
$\Int_j$ & $J_j^2/(a_jY_j)$ & $[0,1]$ & Interface integrity; used to gate or soften actuation. \\
$\kappa_j$ & $Y_j - J_j^2/a_j$ & $\ge 0$ & Local nonlinear residual from the quadratic reduction. \\
$r_j$ & $\kappa_j/Y_j$ & $[0,1]$ & Regime ratio used to stratify interfaces. \\
$F_j$ & $\mu_{j+1}-\mu_j$, with $\mu_j = \log \theta_j - \log \sum_k \theta_k$ & signed & Discrete force from normalized shell energies $\theta_j$. \\
$\delta^2\sigma_{\mathrm{local},j}$ & $(J_j-J_j^{\mathrm{ref}})(F_j-F_j^{\mathrm{ref}})$ & signed & Fast thermodynamic channel for local departures from reference. \\
$\Def$ & $\sum_j a_j(F_j-p^\star)^2$ & $\ge 0$ & Global mismatch between current force profile and coarse-grained slope $p^\star$. \\
$h_{\log}$ & conductance of the shifted log-weighted path graph & $>0$ & Design-oriented bottleneck indicator. \\
$\delta_F$ & mean late-stage ratio $\operatorname{Var}(p)_{\ell}/\operatorname{Var}(p)_{\ell+1}$ under decimation & $>0$ & Relative coarse-graining drift indicator. \\
\bottomrule
\end{tabularx}
\end{table}

\subsection{Flux-force and stability channels}
From shell energies $\theta_j$, we define a normalized log-energy potential
\begin{equation}
    \mu_j = \log(\theta_j+\eps) - \log\!\Big(\sum_{k=0}^{N} \theta_k + (N+1)\eps\Big),
\end{equation}
and the interface force $F_j = \mu_{j+1}-\mu_j$.
This yields a flux-force pair $(J_j,F_j)$ for every interface.
Given a reference state $(J_j^{\mathrm{ref}},F_j^{\mathrm{ref}})$, we use the computable proxy
\begin{equation}
    \delta^2\sigma = \sum_j (J_j-J_j^{\mathrm{ref}})(F_j-F_j^{\mathrm{ref}}),
    \qquad
    \delta^2\sigma_{\mathrm{local},j} = (J_j-J_j^{\mathrm{ref}})(F_j-F_j^{\mathrm{ref}}),
    \label{eq:d2sigma}
\end{equation}
as a practical second-variation channel inspired by the Glansdorff-Prigogine picture \cite{GlansdorffPrigogine}.
The local form $\delta^2\sigma_{\mathrm{local},j}$ is the fastest useful channel in the present fusion experiments.

The defect used in the implementation is not a local $F_j-p_j$ residual.
Instead, it uses a \emph{coarse-grained reference slope} $p^\star$ obtained by guided interface decimation:
\begin{equation}
    \Def = \sum_j a_j\,(F_j-p^\star)^2.
    \label{eq:defect}
\end{equation}
This choice is important for technical consistency with the reported code and makes the defect a global mismatch against a reduced slope profile rather than a trivial restatement of the local completion-of-the-square.

\begin{proposition}[Elementary invariants]
\label{prop:basic}
Under Eqs.~\eqref{eq:completion}--\eqref{eq:defect}, and assuming $a_j>0$ and $Y_j>0$, the reduced observables satisfy the following properties for every interface $j$:
\begin{enumerate}[label=(\roman*),leftmargin=2.2em,itemsep=0.15em,topsep=0.25em]
    \item $\kappa_j \ge 0$;
    \item $0 \le \Int_j \le 1$ and $0 \le r_j \le 1$;
    \item $\Def \ge 0$;
    \item the shifted log-weights defined below satisfy $w_j^{\log}\ge 1$.
\end{enumerate}
\end{proposition}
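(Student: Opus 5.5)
The plan is to prove each item directly from the definitions in Eqs.~\eqref{eq:ajjy}--\eqref{eq:defect}; the only real input is the Cauchy--Schwarz inequality. For (i), apply Cauchy--Schwarz twice. First pointwise, $\abs{\mathrm{Re}(\phi_j^*\psi_j)} \le \abs{\phi_j}\,\abs{\psi_j}$. Then in expectation,
\[
\big(\E[\abs{\phi_j}\abs{\psi_j}]\big)^2 \le \E\big[\abs{\phi_j}^2\big]\,\E\big[\abs{\psi_j}^2\big].
\]
Since $\eps\ge 0$, the right-hand side is at most $(a_j-\eps)(Y_j-\eps)\le a_jY_j$. Hence $J_j^2\le a_jY_j$. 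Dividing by $a_j>0$ gives $J_j^2/a_j\le Y_j$, i.e.\ $\kappa_j\ge 0$ by Eq.~\eqref{eq:completion}.

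An alternative route, closer to the paper's own framing, also works. Observe that $P_j(\lambda)=\E[\abs{\lambda\phi_j-\psi_j}^2]+(\lambda^2+1)\eps\ge 0$ for all real $\lambda$. Evaluating the completed square at $\lambda=p_j$ then gives $\kappa_j=P_j(p_j)\ge 0$. I would mention this route because it ties (i) to the quadratic reduction. The one point needing care is the expansion of $\abs{\lambda\phi_j-\psi_j}^2$, which produces the real part $\mathrm{Re}(\phi_j^*\psi_j)$ for real $\lambda$.

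For (ii), both ratios are well defined because $a_j,Y_j>0$. The lower bound $\Int_j\ge 0$ holds since it is a square over a positive quantity. The upper bound $\Int_j\le 1$ is exactly $J_j^2\le a_jY_j$ from (i). Then the identity $r_j=\kappa_j/Y_j=1-J_j^2/(a_jY_j)=1-\Int_j$ transfers the bounds $[0,1]$ to $r_j$.

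For (iii), $\Def$ in Eq.~\eqref{eq:defect} is a finite sum of terms $a_j(F_j-p^\star)^2$. Each term has $a_j>0$ and is a real square, so the sum is nonnegative. Real-valuedness needs no further assumption: $F_j$ is a difference of real logarithms with positive arguments (thanks to $\eps>0$), and $p^\star$ is a real slope produced by the decimation.

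Item (iv) is the only step I expect to be a potential obstacle, because the shifted log-weights are defined only later in the paper. I would read them off that definition and expect a form like $w_j^{\log}=1+\log(w_j/\min_k w_k)$ or $w_j^{\log}=\log w_j-\min_k\log w_k+1$. In either form, the argument is that the shift makes the minimum weight equal to exactly $1$, and all other weights exceed it by a nonnegative logarithm, since $w_j\ge\min_k w_k$ and $\log$ is monotone. If the shift is instead $1+\abs{\cdot}$ or $1+[\cdot]_+$, the bound is immediate from nonnegativity of the added term. The proof will need only positivity of the underlying raw weights, which follows from $a_j,Y_j>0$.
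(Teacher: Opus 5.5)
Your proposal is correct and follows the paper's proof. The paper argues the same way: Cauchy--Schwarz gives $J_j^2\le a_jY_j$, which yields (i) and then (ii) via $r_j=1-\Int_j$; (iii) is nonnegativity of a weighted sum of squares; and (iv) holds by construction because the actual definition is $w_j^{\log}=\log(a_j+\eps)-\min_k\log(a_k+\eps)+1$, which is exactly your anticipated shifted-log form. Your explicit handling of the $\eps$-regularization, via $J_j^2\le(a_j-\eps)(Y_j-\eps)\le a_jY_j$ or $P_j(\lambda)=\E[\abs{\lambda\phi_j-\psi_j}^2]+(\lambda^2+1)\eps\ge 0$, is a detail that the paper's one-line appeal to Cauchy--Schwarz leaves implicit.
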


\begin{proof}
By Cauchy--Schwarz, $J_j^2 \le a_jY_j$, hence $\kappa_j = Y_j - J_j^2/a_j \ge 0$ and $\Int_j = J_j^2/(a_jY_j)\in[0,1]$.
Since $r_j=1-\Int_j$, it follows that $r_j\in[0,1]$ as well.
Equation~\eqref{eq:defect} is a weighted sum of squared residuals with nonnegative weights $a_j$, so $\Def\ge 0$.
Finally, $w_j^{\log} = \log(a_j+\eps)-\min_k\log(a_k+\eps)+1$ is at least $1$ by construction.
\end{proof}

\subsection{Induced path graph and bottleneck observables}
The shell ordering induces a path graph on vertices $0,1,\dots,N$.
To reduce the effect of multi-decade mobility variation, we use shifted log-weights
\begin{equation}
    w_j^{\log} = \log(a_j+\eps) - \min_k \log(a_k+\eps) + 1,
\end{equation}
which are strictly positive by construction.
The main topological observable is the conductance $h_{\log}$ of this weighted path graph.
The associated Laplacian also yields a spectral gap $\lambda_{1,\log}$ and Fiedler vector, whose localization indicates where the dominant bottleneck sits \cite{Cheeger1970,Chung1997,Fiedler1973}.
In the present surrogate, these quantities are most useful for \emph{screening} and \emph{structural interpretation}, not for the fastest event detection.

\subsection{Coarse graining and the relative Feigenbaum indicator}
To summarize scale-to-scale drift, we iteratively decimate neighboring interfaces and track the variance of the slope profile under decimation.
Let $v_\ell = \operatorname{Var}(p)_{\ell}$ denote the variance after level $\ell$ of the decimation cascade.
We then consider the ratios $v_\ell/v_{\ell+1}$ and define $\delta_F$ as the mean of the last few available ratios.
This is \emph{not} an estimate of the universal Feigenbaum constant.
It is simply a low-dimensional relative indicator of how slope variability collapses under coarse graining.
The notation is retained because the ratio structure is analogous, but the interpretation in this paper is explicitly pragmatic and non-universal.

\subsection{Aggregation and workflow}
We use the following dimensionless score as an operational or certification quantity:
\begin{equation}
    \Phi = w_\sigma \sum_j \frac{\abs{\delta^2\sigma_{\mathrm{local},j}}}{\sigma_{\mathrm{ref}}}
    + w_h\Bigl(1-\frac{h_{\log}}{h_{\log}^{\star}}\Bigr)^2
    + w_\delta\Bigl(\frac{\delta_F-\delta_F^{\star}}{\delta_F^{\star}}\Bigr)^2.
    \label{eq:Phi}
\end{equation}
In the code used here, $\Phi$ is fully instrumented and is useful for online state summarization.
However, a technically important point is that the current geometry-scan normalization does \emph{not} make direct minimum-$\Phi$ selection coincide with the best topological design.
Accordingly, the present manuscript uses the following conservative workflow:
\begin{itemize}
    \item \textbf{Phase 1: design screening.} Use $h_{\log}$ as the primary selector of robust candidate geometries, and treat $\Phi$ as an auxiliary diagnostic.
    \item \textbf{Phase 2: operation.} Use $\delta^2\sigma_{\mathrm{local}}$, integrity, and related quantities for early warning and for conservative regime-aware actuation; use $\Phi$ as a compact operational score.
\end{itemize}
This separation is more faithful to the measured behavior and materially improves the credibility of the framework.

\begin{figure}[t]
    \centering
    \includegraphics[width=0.98\linewidth]{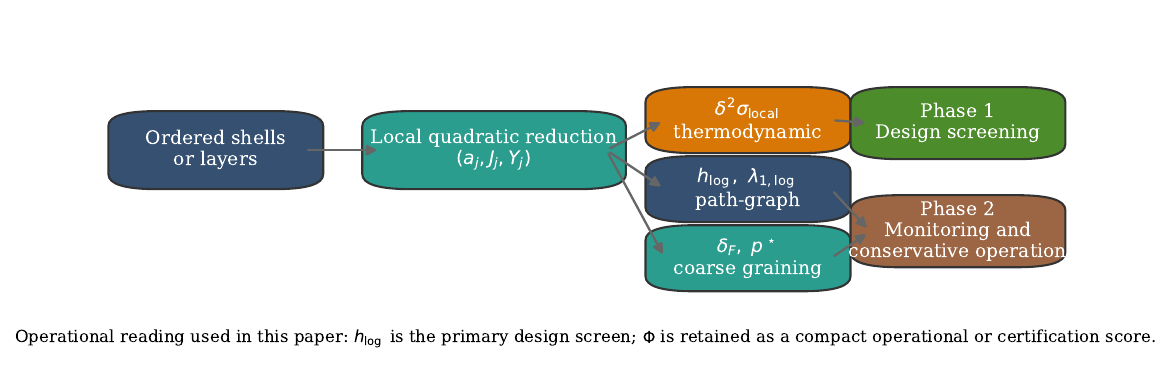}
    \caption{Workflow used in this paper. Ordered shells are reduced to interface-local quadratic statistics, from which thermodynamic, topological, and coarse-graining observables are extracted. For the present numerics, $h_{\log}$ is the Phase-1 design screen, while $\Phi$ is retained as an operational or certification score.}
    \label{fig:framework}
\end{figure}

\subsection{Composite alarm and implementation-level reference quantities}
\label{sec:alarm}
To make the event-detection experiment reproducible, we state the composite alarm explicitly.
Let $\mu_{\kappa,j}^{\mathrm{pre}}$ and $\sigma_{\kappa,j}^{\mathrm{pre}}$ denote the pre-injection mean and standard deviation of the local residual $\kappa_j$, and let $\mu_{\delta,j}^{\mathrm{pre}}$ and $\sigma_{\delta,j}^{\mathrm{pre}}$ denote the corresponding quantities for $\delta^2\sigma_{\mathrm{local},j}$.
Let $R(t)$ be the number of regime transitions at time $t$, with pre-injection statistics $\mu_R^{\mathrm{pre}}$ and $\sigma_R^{\mathrm{pre}}$.
The implementation uses positive-part $z$-scores,
\begin{align}
    z_\kappa(t) &= \max_j \pp{\frac{\kappa_j(t)-\mu_{\kappa,j}^{\mathrm{pre}}}{\sigma_{\kappa,j}^{\mathrm{pre}}}}, \\
    z_\delta(t) &= \max_j \pp{\frac{\mu_{\delta,j}^{\mathrm{pre}}-\delta^2\sigma_{\mathrm{local},j}(t)}{\sigma_{\delta,j}^{\mathrm{pre}}}}, \\
    z_R(t) &= \pp{\frac{R(t)-\mu_R^{\mathrm{pre}}}{\sigma_R^{\mathrm{pre}}}}, \\
    S(t) &= 0.5\,z_\kappa(t) + 0.3\,z_\delta(t) + 0.2\,z_R(t),
    \label{eq:composite}
\end{align}
and declares an alarm at the first time for which $S(t)>2$.
This formula is not presented as a universal law; it is the exact rule used in the reported detection study.

The same philosophy is used for the scalar operational score $\Phi$.
In the reported implementation,
\begin{equation}
    \sigma_{\mathrm{ref}} = \frac{1}{N}\sum_j \abs{J_j^{\mathrm{ref}}F_j^{\mathrm{ref}}} + \eps,
\end{equation}
the design-screening target is the best sampled conductance $h_{\log}^{\star}=0.09465$, and the coarse-graining target is the baseline median $\delta_F^{\star}=1.738$.
Appendix~\ref{app:params} collects the remaining thresholds and operational constants used in the fusion surrogate.

\section{Fusion-relevant shell-model study}
\label{sec:fusion}

\subsection{Experimental scope and evidence map}
The main case study uses an MHD Sabra shell model with $N=18$ shells, scale ratio $\lambda = \sqrt{2}$, and time step $\Delta t = 10^{-4}$.
The purpose of the surrogate is methodological: to provide a controlled environment in which the observables can be stressed before any device-level claim.
Table~\ref{tab:evidence} summarizes the numerical evidence used in the paper.
Two implementation details are important for interpretation.
The degradation study uses \emph{synthetic anomalous-dissipation probes}: in each event, a randomly chosen mid-radius shell is perturbed for a finite time window and then released.
The geometry study uses \emph{sparse multiplicative coupling perturbations}: each candidate perturbs a small random subset of shell couplings around the baseline.

\begin{table}[t]
\centering
\caption{Evidence map for the numerical results used in this paper.}
\label{tab:evidence}
\small
\begin{tabularx}{\linewidth}{>{\raggedright\arraybackslash}p{2.8cm} >{\raggedright\arraybackslash}p{2.3cm} >{\raggedright\arraybackslash}p{3.3cm} X}
\toprule
Experiment & Sample size & Primary outputs & Use in claims \\
\midrule
Baseline characterization & 5 seeds + 59 audit windows & $E$, $h_{\log}$, $p^\star$, $\delta_F$, regime composition & reference operating envelope and algebraic sanity checks \\
Injected degradations & 400 probes + 313 lead-time matches & detection rate, latency, lead time & early warning and operational monitoring \\
Geometry screening & 5000 sampled couplings & $h_{\log}$ distribution, best screened design, best-$\Phi$ mismatch & topology-first design screening \\
Operational comparison & 300 cycles for each of 3 modes & recovery, positive rate, mean absolute power, efficiency & conservative operation claims \\
AI appendix & 2 seeds, 3097 steps, 155 checkpoints & logged objective, defect, runtime overhead & cross-domain portability only \\
\bottomrule
\end{tabularx}
\end{table}

\subsection{Baseline characterization and reference-state audit}
Across the five baseline seeds, the total energy is $51.32 \pm 1.89$, the log-Cheeger conductance is $0.07475 \pm 0.00171$, and the coarse-grained slope is $p^\star = -0.613 \pm 0.911$.
The relative coarse-graining indicator is heavy-tailed in this surrogate, with mean $11.55 \pm 20.09$ but median $1.738$.

The reference-state audit over 59 windows gives two useful numerical sanity checks.
The maximum observed integrity is $\max_j J_j^2/(a_jY_j)=0.999992<1$, and the minimum residual is $\min_j \kappa_j = 1.78\times 10^{-10}>0$, both consistent with the algebraic bounds in Section~\ref{sec:framework}.
The same audit classifies 4 of the 17 interfaces as transitional and 13 as strongly nonlinear; none remain in the near-Onsager band.
This is important context for the rest of the results: the surrogate is typically operated far from a small-perturbation linear-response regime, so the fastest useful signals are local thermodynamic ones rather than topological or near-Onsager summaries.

\subsection{Detection on synthetic degradation probes}
The cleanest result of the fusion surrogate is the degradation-detection study.
Table~\ref{tab:detection} and Fig.~\ref{fig:detection} compare channels on 400 synthetic anomalous-dissipation probes.
Each probe applies an additional dissipative load to one randomly chosen shell in the middle third of the cascade, with a logarithmically sampled strength and a fixed finite duration, and then restores the baseline dynamics.
The strongest detectors are local thermodynamic ones, especially $\delta^2\sigma_{\mathrm{local}}$ and $\kappa_{\mathrm{local}}$.
Topological channels are much slower and substantially less sensitive as event detectors, which supports their use primarily in the design phase.
The composite alarm of Eq.~\eqref{eq:composite} combines the strongest channels and remains both fast and reliable.

\begin{table}[t]
\centering
\caption{Detection on 400 synthetic degradation probes. Rates are shown with Wilson 95\% intervals. Latency statistics are given as mean; median [IQR]. Channels with zero detections are omitted.}
\label{tab:detection}
\footnotesize
\begin{tabularx}{\linewidth}{>{\raggedright\arraybackslash}p{3.0cm} >{\raggedright\arraybackslash}p{2.8cm} X}
\toprule
Channel & Rate (95\% CI) & Latency statistics \\
\midrule
$\delta^2\sigma_{\mathrm{local}}$ & 100.0\% [99.0, 100.0] & 3.80; 1.77 [0.54, 4.23] \\
Composite & 99.8\% [98.6, 100.0] & 3.09; 0.54 [0.54, 4.23] \\
$\kappa_{\mathrm{local}}$ & 99.8\% [98.6, 100.0] & 4.17; 0.54 [0.54, 5.46] \\
Regime transitions & 82.8\% [78.7, 86.1] & 21.83; 21.45 [6.69, 33.75] \\
$\delta_F$ & 72.0\% [67.4, 76.2] & 18.87; 12.84 [5.46, 31.29] \\
$D_{\mathrm{Onsager}}$ & 57.8\% [52.9, 62.5] & 25.46; 25.14 [10.38, 38.67] \\
$\delta^2\sigma$ & 57.3\% [52.4, 62.0] & 19.33; 14.07 [4.23, 32.52] \\
Defect $\Def$ & 56.8\% [51.9, 61.5] & 31.88; 32.52 [22.68, 42.36] \\
$h_{\log}$ & 9.8\% [7.2, 13.1] & 33.02; 31.29 [21.45, 43.59] \\
Fiedler angle & 4.5\% [2.9, 7.0] & 25.48; 20.84 [8.54, 41.13] \\
\bottomrule
\end{tabularx}
\end{table}

For this lead-time analysis, the \emph{energy-collapse proxy} is defined eventwise as the first post-injection time at which total energy departs from its pre-injection mean by more than two pre-injection standard deviations.
A matched pair is an event in which both this proxy and at least one monitored OPCR trigger are present; the reported lead time is $t_E-t_A$, where $t_A$ is the \emph{earliest} OPCR trigger among the monitored channels for that event.
Under this rule, the earliest OPCR trigger leads the energy-collapse proxy by $11.29 \pm 13.49$ time units on average, with median $6.15$ and IQR $[1.23,17.22]$.
Among the 313 matched alarm-proxy pairs, 255 are early warnings.
Operationally, the indicator does not replace the full state; it moves before the macroscopic failure signature.

\begin{figure}[t]
    \centering
    \includegraphics[width=0.98\linewidth]{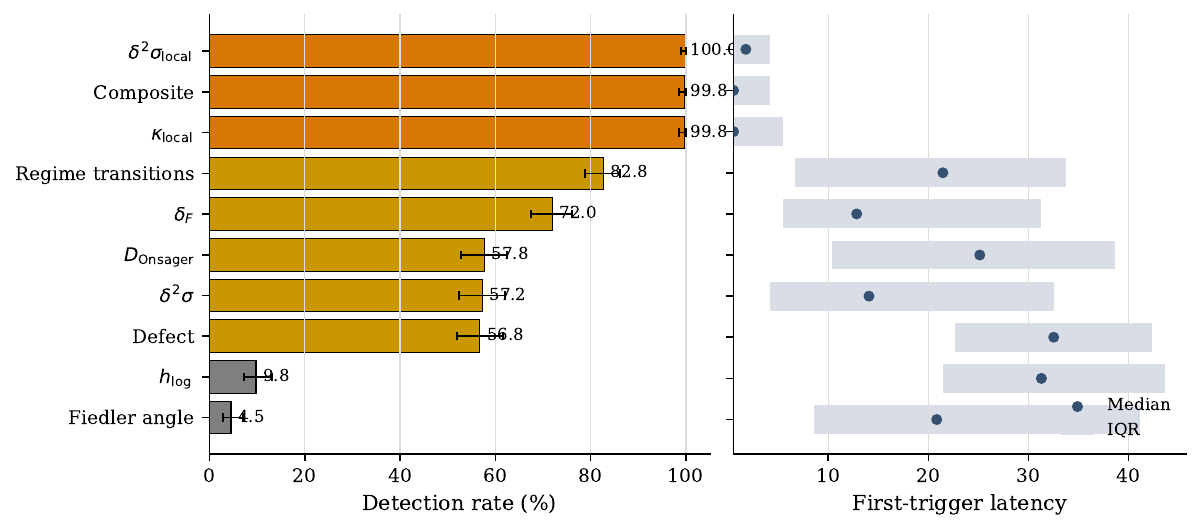}
    \caption{Detection performance for the fusion surrogate. Left: detection rate with Wilson 95\% intervals. Right: first-trigger latency summarized by the median and IQR. Thermodynamic channels occupy the high-reliability, low-latency corner; topological channels are weaker and slower as event detectors.}
    \label{fig:detection}
\end{figure}

\subsection{Regime fingerprints}
The eight named operating regimes occupy distinct regions in the $(p^\star,h_{\log})$ plane, shown in Fig.~\ref{fig:regime_plane}.
This plot is not a classifier; it is a compact fingerprint of how different degradations and heating patterns perturb the reduced observables.
Two points matter for interpretation.
First, regimes with poor quality need not always have the smallest $h_{\log}$.
Second, the broad-degradation regime achieves a relatively high $h_{\log}$ despite being clearly degraded in the thermodynamic channels.
This is another reason not to collapse the framework to a single purely topological score.

\begin{figure}[t]
    \centering
    \includegraphics[width=0.92\linewidth]{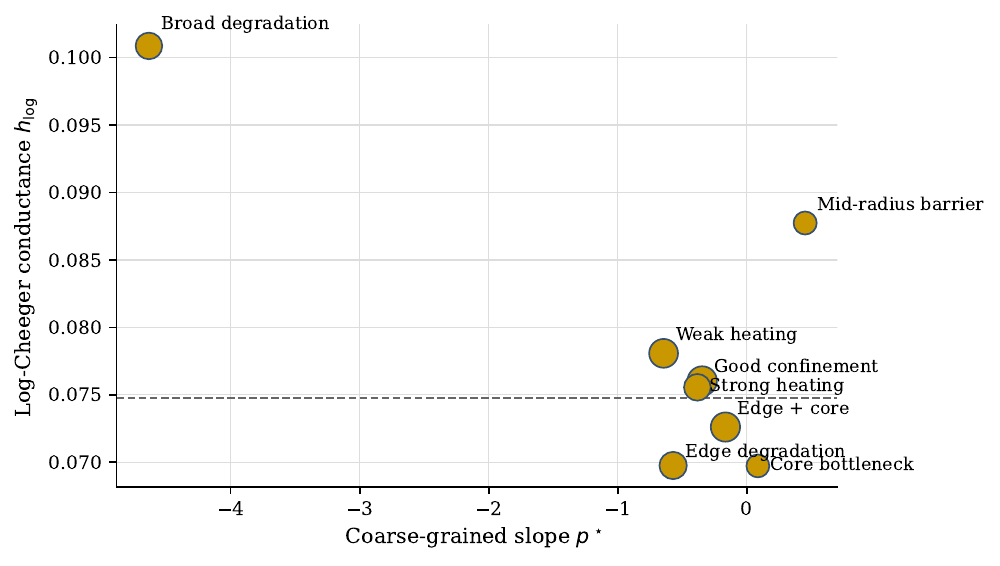}
    \caption{Fusion-surrogate regime fingerprints in the $(p^\star,h_{\log})$ plane. Marker area scales with the coarse-graining indicator $\delta_F$ through a clipped $\log(1+\delta_F)$ map. The dashed horizontal line marks the baseline mean $h_{\log}$.}
    \label{fig:regime_plane}
\end{figure}

\subsection{Design screening over coupling geometries}
The design scan over 5000 sampled coupling geometries is the strongest argument for a stellarator-oriented reading of the framework.
In the reported implementation, each candidate perturbs between two and five shell couplings chosen uniformly at random, with multiplicative factors $10^{u}$ for $u\sim\mathrm{Unif}[-0.5,0.5]$.
The baseline mean is $h_{\log}=0.07475$, while the best screened geometry reaches $h_{\log}^{\star}=0.09465$, a gain of 26.6\%.
This is a substantial shift in the topological bottleneck metric using geometry alone.

A technically important clarification is that the minimum-$\Phi$ geometry in the current implementation is \emph{not} the best-$h_{\log}$ geometry.
The minimum-$\Phi$ case has $h_{\log}=0.06847$, which is 8.4\% below the baseline mean.
This mismatch indicates that the current geometry-wide $\Phi$ normalization should not yet be treated as a sole design objective.
For that reason, the present paper uses $h_{\log}$ as the primary design screen and interprets $\Phi$ as an auxiliary certification or operational quantity.
That distinction is methodologically cleaner and avoids overclaiming.

\subsection{Conservative operation}
Three operational modes were compared over 300 cycles each: a uniform baseline, an integrity-aware mode, and a conservative $\Phi$-instrumented mode.
The results are summarized in Table~\ref{tab:control} and Fig.~\ref{fig:design_control}.
The supported claim is straightforward: nonuniform low-power actuation is much more efficient than the uniform baseline.
Both nonuniform modes achieve about three times the recovery-per-unit-power efficiency of the uniform controller.

A technical point needs to be stated explicitly.
The stationkeeping actuator is subtractive in sign, so operational effort is reported as mean \emph{absolute} actuation power $\langle \abs{P} \rangle$.
This matches the experiment log and makes effort comparable across control laws.

\begin{table}[t]
\centering
\caption{Operational comparison on the fusion surrogate. Efficiency is mean recovery divided by mean absolute actuation power. Relative efficiency is normalized to the uniform baseline.}
\label{tab:control}
\small
\begin{tabular}{lrrrrr}
\toprule
Mode & Recovery & Positive rate & Mean $\abs{P}$ & Efficiency & Relative eff. \\
\midrule
Uniform & $1.19 \pm 1.30$ & 74\% & 1.75 & 0.682 & 1.00x \\
Integrity-aware & $1.08 \pm 1.36$ & 70\% & 0.52 & 2.051 & 3.01x \\
$\Phi$-instrumented & $1.08 \pm 1.35$ & 71\% & 0.53 & 2.060 & 3.02x \\
\bottomrule
\end{tabular}
\end{table}

The deeper technical point is that the two nonuniform modes are nearly identical in aggregate performance.
This is not a weakness of the dataset; it is a faithful reflection of the current implementation.
In the code used here, most of the operational benefit comes from integrity-aware power reduction.
The full $\Phi$ instrumentation provides a coherent diagnostic layer and remains useful for state summarization, but the current conservative actuation law does not yet produce a measurable advantage over integrity-aware operation alone.
Accordingly, the present study supports a strong claim about \emph{diagnostics} and a moderate claim about \emph{low-power regime-aware actuation}, but not yet a strong claim that $\Phi$ feedback itself outperforms a simpler integrity-based controller.

\begin{figure}[t]
    \centering
    \includegraphics[width=0.98\linewidth]{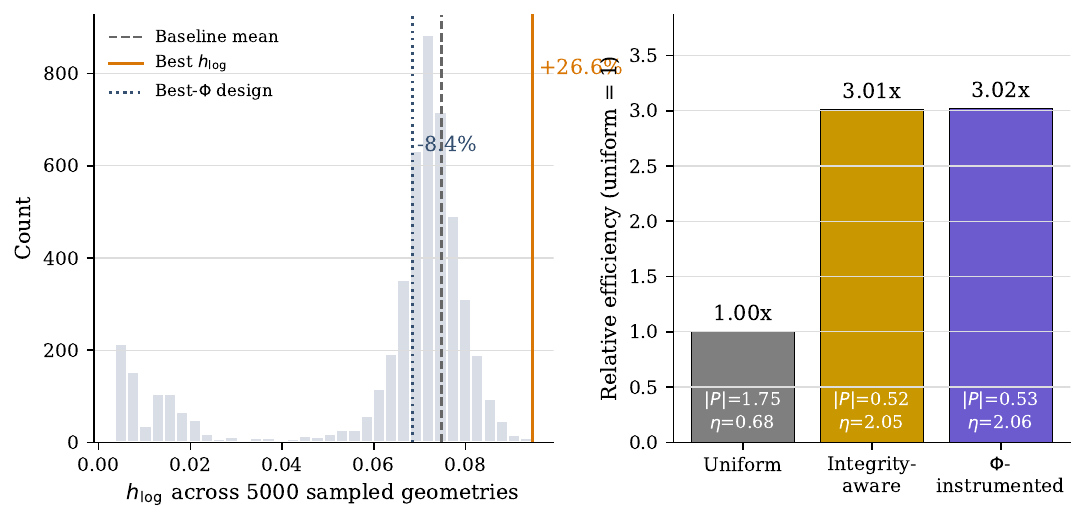}
    \caption{Left: distribution of $h_{\log}$ over 5000 sampled geometries, with the baseline mean, the best screened design, and the current best-$\Phi$ design marked. Right: efficiency normalized to the uniform baseline. Text inside the bars reports mean absolute power and absolute efficiency $\eta$.}
    \label{fig:design_control}
\end{figure}

\subsection{What the current numerics support}
Table~\ref{tab:claims} is included deliberately.
It separates the claims that are supported by the current numerics from those that are not yet supported.
This distinction is one of the main reasons the manuscript is more credible in its present form than a broader, more ambitious framing would be.

\begin{table}[t]
\centering
\caption{Claim ledger for the present paper.}
\label{tab:claims}
\footnotesize
\begin{tabularx}{\linewidth}{>{\raggedright\arraybackslash}p{4.4cm} >{\raggedright\arraybackslash}p{1.9cm} X}
\toprule
Claim & Status & Numerical basis \\
\midrule
Fast local thermodynamic early warning & \textbf{Supported} & $\delta^2\sigma_{\mathrm{local}}$ detects 400/400 probes; the composite alarm detects 399/400, and the earliest OPCR trigger leads 255 of 313 matched energy-anomaly proxies. \\
$h_{\log}$ as a Phase-1 design screen & \textbf{Supported} & Best screened geometry reaches $0.09465$ versus baseline mean $0.07475 \pm 0.00171$ (+26.6\%). \\
Direct minimum-$\Phi$ geometry selection & \textbf{Not yet supported} & The current best-$\Phi$ geometry has $h_{\log}=0.06847$, which is 8.4\% below the baseline mean. \\
Nonuniform low-power actuation beats uniform actuation & \textbf{Supported} & Integrity-aware and $\Phi$-instrumented controllers reach 3.01x and 3.02x the uniform efficiency at much smaller mean absolute power. \\
$\Phi$ feedback beats a simpler integrity-aware controller & \textbf{Not yet supported} & Aggregate efficiencies are essentially identical: 2.060 versus 2.051 in the current implementation. \\
Large-scale ML performance claims & \textbf{Out of scope} & The appendix is a toy next-token experiment with two seeds and is used only as a portability check. \\
\bottomrule
\end{tabularx}
\end{table}

\section{Why stellarators are the natural fusion target}
For magnetic-confinement fusion, the most convincing application of the present framework is not generic closed-loop control, but \emph{topology-aware design and design certification}.
This is especially true for stellarators.
In stellarators, external coils create the three-dimensional magnetic geometry and much of the confinement quality is decided by topology before any feedback loop is applied \cite{Helander2014,Boozer2015,Pedersen2016}.
That is exactly the setting in which a reduced observable such as $h_{\log}$, or eventually a better-calibrated topological-thermodynamic score, can be tested honestly.

A plausible future mapping is straightforward in principle.
Nested flux surfaces or radial bands would play the role of shells; transport coefficients inferred from power balance would provide effective edge weights; deviations of flux-gradient pairs would feed the local thermodynamic channel; and multiscale fluctuation spectra would feed the coarse-graining channel.
In such a program, the present reduction would be most naturally used as:
\begin{itemize}
    \item an additional objective or certification score during configuration and coil optimization, and
    \item a low-dimensional monitor for regime drift during operation.
\end{itemize}
The first of these is the natural primary target.
The second is interesting, but should remain secondary and conservative.
This priority ordering matches the numerics reported here: topology changes move the design metric strongly, whereas the present controller mainly benefits from conservative integrity-aware power reduction.

\section{Limitations and next steps}
The framework is intentionally reduced, and its limits should remain explicit.
First, the fusion results are obtained on a shell-model surrogate driven by synthetic probes, not on a reduced-MHD solver, a gyrokinetic model, or an experimental device.
Second, the geometry-wide normalization of $\Phi$ is not yet calibrated well enough to replace $h_{\log}$ as a stand-alone design selector.
Third, the current $\Phi$-instrumented controller is conservative and does not yet demonstrate a measurable gain over simpler integrity-aware actuation.
Fourth, the neural-network appendix is toy-scale and should be read only as a portability check.

These limitations do not remove the main contribution.
What the present work establishes is that a small, interpretable set of observables can organize multiscale dissipative behavior across several tasks: early warning, regime fingerprinting, topology screening, and conservative operation.
The natural next steps are validation on reduced-MHD or gyrokinetic simulations and then on stellarator-oriented design workflows where the topological aspect is most naturally tested.

\section{Conclusion}
This paper proposes a reduced thermodynamic-topological description for ordered multiscale dissipative systems and evaluates it on a fusion-relevant MHD shell model.
The strongest results are: a fast and reliable local thermodynamic detection channel, a substantial gain in design-screening conductance across randomized geometries, and a clear efficiency advantage of nonuniform low-power actuation over a uniform baseline.
The paper also clarifies an important methodological point: in the current implementation, $h_{\log}$ is the credible Phase-1 design observable, while $\Phi$ is best treated as an operational or certification score pending better calibration.

For fusion, the natural destination of this framework is stellarator-oriented design, where topology already plays the dominant role.
That is where a compact topological-thermodynamic score can be tested most directly and most honestly.

\FloatBarrier
\appendix

\section{Toy cross-domain illustration: neural-network training}
\label{app:ai}
The purpose of this appendix is deliberately modest.
It shows only that the same reduced observables can be computed during neural-network training and can be coupled to a small regularization scheme.
It does \emph{not} claim state-of-the-art ML performance.
For thermodynamic perspectives on learning, see Goldt and Seifert \cite{GoldtSeifert2017}.

\subsection{Mapping layers to shells}
Let $h^{(j)} \in \R^{B\times T\times d}$ denote the residual-stream activation after block $j$ of a Transformer.
We treat the ordered sequence $\{h^{(j)}\}_{j=0}^{L}$ as shells and use
\begin{equation}
    \phi_j = h^{(j)},
    \qquad
    \psi_j = h^{(j+1)} - h^{(j)},
\end{equation}
with averages taken over batch, token, and channel indices.
This yields the same quantities $a_j$, $J_j$, $Y_j$, $\kappa_j$, $\Int_j$, $\delta^2\sigma_{\mathrm{local},j}$, and $h_{\log}$ as in the main text.

\subsection{Setup and measured quantities}
The attached AI experiments use a small causal Transformer (12 layers, width 256, 8 heads) trained on a synthetic next-token task with two seeds.
To avoid overstating what the data support, Table~\ref{tab:ai} reports the \emph{monitored objective} and defect averaged over the last five monitoring points.
In the OPCR variants, the monitored objective on OPCR update steps includes the added regularization term, so these numbers should be read as logged optimization objectives, not as clean held-out task losses.
This interpretation is intentionally conservative.

\begin{table}[H]
\centering
\caption{Toy Transformer summary, averaged over the last five monitoring points and over two seeds.}
\label{tab:ai}
\small
\begin{tabular}{lrrr}
\toprule
Mode & Logged objective & Defect & Comment \\
\midrule
Standard & $2.660 \pm 0.050$ & $1.237 \pm 0.003$ & baseline \\
OPCR v3 & $2.618 \pm 0.066$ & $0.884 \pm 0.008$ & best trade-off \\
OPCR v5 & $2.917 \pm 0.035$ & $0.687 \pm 0.046$ & over-regularized \\
OPCR v6 & $2.656 \pm 0.047$ & $1.028 \pm 0.0002$ & modest defect gain \\
\bottomrule
\end{tabular}
\end{table}

The most credible take-away is that OPCR v3 improves the logged objective by about 1.6\% relative to the baseline while reducing the defect by about 28.6\%.
OPCR v5 reduces the defect more strongly but worsens the objective, which is informative rather than problematic: it shows that the defect is not trivially identical to the training objective and that overly aggressive regularization can overshoot.
A small overhead study gives a mean step-time increase of 11.7\% for monitoring only and 23.5\% for the full OPCR v3 stack; the multi-channel extension costs about 49.1\% in this toy setting.

\begin{figure}[t]
    \centering
    \includegraphics[width=0.98\linewidth]{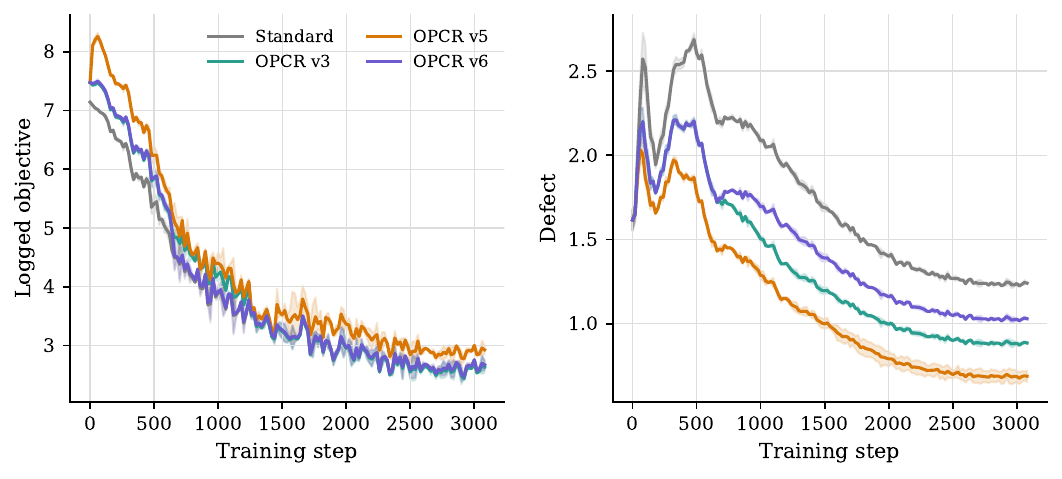}
    \caption{Toy Transformer example. Curves show the mean over two seeds with min--max shading. Left: logged optimization objective (including the regularization term on OPCR update steps). Right: defect.}
    \label{fig:ai_curves}
\end{figure}

\begin{figure}[t]
    \centering
    \includegraphics[width=0.70\linewidth]{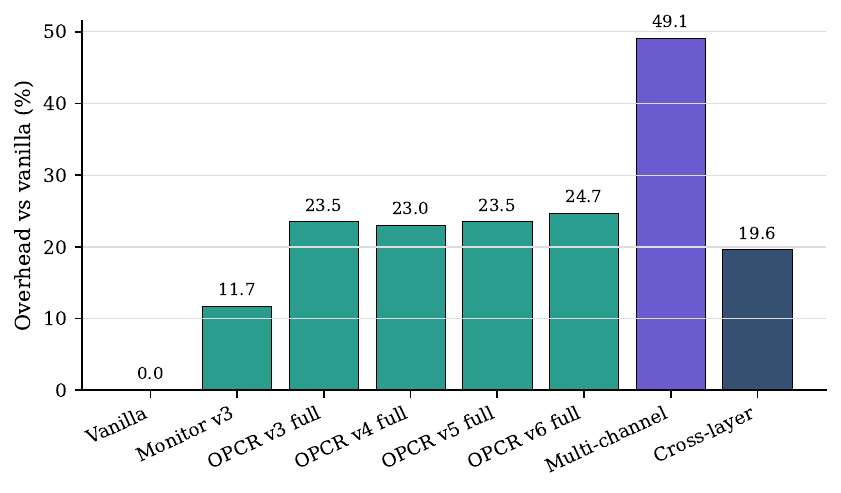}
    \caption{Toy-scale runtime overhead for the AI illustration.}
    \label{fig:ai_overhead}
\end{figure}

\FloatBarrier

\section{Implementation details and parameter tables}
\label{app:params}
The main claims of the paper are based on the fusion surrogate, so the implementation details collected here focus on that case study.
The goal is not to elevate the current numerical constants into theory, but to make the reported experiments locally reproducible from the attached code and JSON outputs.

\begin{table}[H]
\centering
\caption{Fusion-surrogate thresholds and alarm rules used in the reported implementation. All departures are measured relative to pre-event means and standard deviations, with a small $\eps$ added in code-level denominators.}
\label{tab:params_detection}
\footnotesize
\begin{tabularx}{\linewidth}{>{\raggedright\arraybackslash}p{3.7cm} >{\raggedright\arraybackslash}p{2.4cm} X}
\toprule
Item & Value & Implementation meaning \\
\midrule
Regime bands via $r_j$ & $r_j<0.1$, $0.1\le r_j<0.9$, $r_j\ge 0.9$ & Interfaces are labeled Onsager, transitional, and strongly nonlinear, respectively. \\
$\kappa_{\mathrm{local}}$ detector & $3\sigma$ upward excursion & First time any interface exceeds its pre-event local residual band by $3\sigma$. \\
$h_{\log}$ detector & $2\sigma$ downward excursion & First time the log-Cheeger conductance drops below its pre-event band by $2\sigma$. \\
$D_{\mathrm{Onsager}}$ detector & $3\sigma$ upward excursion & Detection on the stratified Onsager-band defect. \\
$\delta^2\sigma$ and $\delta^2\sigma_{\mathrm{local}}$ & $2\sigma$ negative excursion & Positive $z$ corresponds to a drop below the pre-event mean, matching the code path used for self-amplifying departures. \\
Fiedler-angle detector & $2.5\sigma$ upward excursion & Rotation of the bottleneck localization relative to the reference Fiedler vector. \\
$\delta_F$ detector & $2.5\sigma$ absolute deviation & Deviation of the coarse-graining indicator from its pre-event band. \\
Regime-transition detector & $\max(\mu_R+2\sigma_R,1)$ & Requires at least one transition and a statistically nontrivial excess over the pre-event count. \\
Composite alarm & $(0.5,0.3,0.2)$ with $S>2.0$ & Weights for $(z_\kappa,z_\delta,z_R)$ in Eq.~\eqref{eq:composite}. \\
\bottomrule
\end{tabularx}
\end{table}

\begin{table}[H]
\centering
\caption{$\Phi$ and conservative-operation constants used in the fusion surrogate.}
\label{tab:params_phi}
\footnotesize
\begin{tabularx}{\linewidth}{>{\raggedright\arraybackslash}p{3.7cm} >{\raggedright\arraybackslash}p{2.8cm} X}
\toprule
Item & Value & Implementation meaning \\
\midrule
$\Phi$ weights & $w_\sigma=1.0$, $w_h=0.1$, $w_\delta=0.5$ & Relative weight of the thermodynamic, topological, and coarse-graining terms in Eq.~\eqref{eq:Phi}. \\
$\sigma_{\mathrm{ref}}$ & $\frac{1}{N}\sum_j \abs{J_j^{\mathrm{ref}}F_j^{\mathrm{ref}}}+\eps$ & Normalization used for the local $\delta^2\sigma$ term in $\Phi$. \\
Design target & $h_{\log}^{\star}=0.09465$ & Best screened conductance from the 5000-geometry design scan. \\
Coarse-graining target & $\delta_F^{\star}=1.738$ & Baseline median of the relative coarse-graining indicator. \\
Control parameters & $\alpha=0.3$, $\beta=2.0$, $\gamma=3.0$ & Maximum control fraction, tanh sensitivity, and smooth exponential gate in the conservative actuation law. \\
Max reduction & $0.3$ & Maximum shell-wise reduction in forcing amplitude. \\
Integrity floor & $0.3$ & Lower bound on the integrity multiplier; forcing is never fully eliminated. \\
Safety threshold & $0.85\,h_{\mathrm{base}}$ & Revert to the safe baseline if the current conductance falls too far below the reference level. \\
Targeting limits & 2 vulnerable interfaces, upstream offset 2 & Conservative limit on how many vulnerable locations are acted upon and how far upstream the actuation is shifted. \\
\bottomrule
\end{tabularx}
\end{table}

\FloatBarrier
\section*{Reproducibility note}
The manuscript was assembled from the attached code, logs, and JSON outputs for the fusion and AI experiments. The exact constants used for the fusion case study are summarized in Appendix~\ref{app:params}.

\end{document}